\newif\ifCLASSOPTIONromanappendices \CLASSOPTIONromanappendicestrue
\newtheorem{lemma}{Result}
\algnewcommand\algorithmicforeach{\textbf{for each}}
\newdimen{\algindent}
\algnewcommand\LeftComment[2]{%
\hspace{#1\algindent}$\triangleright$ \eqparbox{COMMENT}{#2} \hfill %
}
\algnewcommand\LeftCommentNoTriangle[2]{%
\hspace{#1\algindent} \eqparbox{COMMENT}{#2} \hfill %
}
\algnewcommand\LeftCommentNoIntent[1]{%
$\triangleright$ \eqparbox{COMMENT}{#1} \hfill %
}
\tikzset{block/.style={draw,very thick,text width=2cm,minimum height=4cm,align=center},
         line/.style={-latex}}
\tikzset{blockV/.style={draw,very thick,text width=2cm,minimum height=2cm, minimum width=4cm,align=center},
         line/.style={-latex}}
\tikzset{blockExt/.style={draw,very thick,minimum height=1cm, minimum width=1cm,align=center},
         line/.style={-latex}}
\definecolor{light-gray}{HTML}{E0E0E0}
\newcommand\notsotiny{\@setfontsize\notsotiny{6.82}{7.5}}
\newcommand{\B}[1]{\boldsymbol{#1}}
\newcommand\figscale{0.569}
\newcommand{\labeltarget}[1]{\Hy@raisedlink{\hypertarget{#1}{}}}
\begin{document}
\title{Massive MIMO with Dense Arrays and 1-bit Data Converters}

\author{Amine Mezghani$^1$, \IEEEmembership{Member, IEEE}, Faouzi Bellili$^1$, \IEEEmembership{Member, IEEE}, and Robert W. Heath,~Jr.$^2$,  \IEEEmembership{Fellow, IEEE}
 \vspace{0.3cm}
\\\small $^1$E2-390 E.I.T.C, 75 Chancellor's Circle,  Winnipeg, MB, R3T 5V6, Canada.  \\
 $^2$890 Oval Drive
3100 Engineering Building II,
Raleigh, NC 27695, United States
  \vspace{0.1cm}
  \\\small Emails: 
  \{amine.mezghani,faouzi.bellili\}@umanitoba.ca, rwheathjr@ncsu.edu
  \vspace{0.3cm}
\thanks{A.~Mezghani and F.~Bellili are with the ECE Department at the University of Manitoba, Winnipeg, MB, Canada (emails:   \{Amine Mezghani, Faouzi.Bellili\}@umanitoba.ca). R.~W.~Heath is with the North Carolina State University (email: rwheathjr@ncsu.edu). This work was supported by Discovery Grants Program of the Natural Sciences and Engineering Research Council of Canada (NSERC) and the US National Science Foundation (NSF).}}

\maketitle
\begin{abstract}
We consider wireless communication systems with compact planar arrays having densely spaced antenna elements in conjunction with  one-bit analog-to-digital and digital-to-analog converters (ADCs/DACs). We provide closed-form expressions for the achievable rates with simple linear processing techniques for the uplink as well as the downlink scenarios while taking into account the effects of antenna mutual coupling. In the downlink case, we introduce the concept of non-radiating dithering to combat correlations of the quantization errors. Under higher antenna element density, we show that the performance of the quantized system can be made close to the ideal performance regardless of the operating signal-to-noise ratio.   
\end{abstract}

\begin{IEEEkeywords}
Massive MIMO, Broadband antenna arrays, Tightly coupled antennas, Sub-wavelength oversampling, One-bit ADCs/DACs. 
\end{IEEEkeywords}


\section{Introduction}
The use of multi-antenna transmissions has shaped the wireless research and industry in the last two decades \cite{heath2018foundations}. With the advent of massive multiple-input-multiple-output (massive MIMO) \cite{Marzetta_2010}, high simultaneous multiplexing/diversity/array gains were made possible, thereby substantially improving energy and spectral efficiencies at the same time. Yet, several questions remain open: $i)$ the development of multi-standard super-wideband base stations, $ii)$ the realization of fully digital radio-frontend at millimeter-wave frequencies and above, and $iii)$ the convergence of the radio technology into a unified platform \cite{thomae_2019}. 


Much like the physical-layer wireless technology, the antenna technology has evolved immensely and the current trend is moving toward ultrathin compact/conformal antenna surfaces with subwavelength (i.e., microscopic) structures, also known as metasurfaces with 2D applied/induced magnetic and electric currents along the surface \cite{Renzo2020}. These new designs will enable multipurpose reconfigurable active/passive radiators/scatterers that can operate at several frequencies and enable more than one functionality at the same time via full 3D (i.e., $360^\circ$) beamforming. 

Conventional narrowband antenna arrays that are for instance still used in massive MIMO systems are generally based on a discrete array architecture where the element spacing is in the order of a half-wavelength. Newer antenna concepts, such as metasurfaces and tightly coupled dipole arrays \cite{Neto2006} are by contrast made of smaller elements with sub-wavelength spacing to synthesize a quasi-continuous aperture.

In the context of passive metasurfaces, the subwavelength structure offers a flexible and efficient wavefront manipulation such as beam-steering of the incident wave by optimizing the surface impedance distribution.
There are many other reasons for which subwavelength oversampling of the aperture can be advantageous also in the case of active arrays. First, for wideband antenna and ultra-wideband arrays with tightly coupled elements \cite{Jones:2007,Moulder:2012}, the inter-element spacing should be equal or smaller than the minimum operating half-wavelength to avoid grating lobes, leading to inherent oversampling at the lower operating frequencies. Second,  super-directivity \cite{williams2019communication} is possible with compact arrays albeit only at a small scale with an aperture size of few wavelengths due to the prohibitive losses and extreme narrowband behavior  \cite{Harrington:1960,Ivrlac:2010_2}. Third, near-field communication based on evanescent non-propagating fields (and thus not limited by the diffraction limit) can be envisioned for certain applications such as chip-to-chip communications and brain implant data transfer offering high multiplexing gains and high-security standard \cite{Phang2018}. 


Our work falls under the context of exploiting spatial oversampling to reduce the effects of nonlinear radio frontends, such as the quantization losses when low-resolution ADCs and DACs are used even down to just one-bit resolution per real dimension as a means to reduce implementation complexity and power consumption. This idea of trading simpler hardware for higher sampling in space has been already considered in prior work \cite{pirzadeh2020effect,Palguna2016,Scholnik2004,Krieger2013,Shao2019}. Indeed, the trend towards wideband arrays with subwavelength structures will make this coarse quantization concept even more attractive. Prior work, however, mainly focused on sigma-delta low-bit conversion that applies only for uniform linear arrays and necessitates sequential (i.e., non-parallelizable) processing.\\

\subsection{Contributions}
We derive bounds on the rate achievable with tightly coupled antenna arrays where the inter-element spacing is less than the half-wavelength for both the downlink as well as the uplink cases under 1-bit data converters. To this end, we derive a model for the antenna array response as well as for the spatial noise correlations that takes into account the antenna coupling effects based on the law of power conservation. The lower bound on the achievale rate is obtained through the Bussgang decomposition of the converters' output and by treating nonlinear distortion as addtive noise.   

For a fixed total antenna aperture, we show that, while the uplink and downlink system with infinite resolution ADCs and DACs do not benefit from the spatial oversampling, one-bit systems can substantially take advantage of the sub-wavelength sampling even with purely linear processing. To this end, we show the essential role of dithering applied prior to quantization to decorrelate the distortion error, a necessary feature to approach the ideal performance with linear processing. For the receiving arrays, dithering is performed naturally via the noise added by the low noise amplifiers (LNAs).  For transmitting arrays, this can be achieved by intentionally adding digital non-radiating (reactive) noise to the digital transmit signal before the one-bit DAC. 

Interestingly, since dithering can be fully controlled in the downlink case, one-bit transmitting  arrays appear to asymptotically approach the ideal performance at zero loss, and regardless of the effective SNR, as the number of spatial oversampling factor grows large. In the case of dense receiving arrays, a non-zero, but SNR independent, gap to the ideal case persists due to the fact that the dither signal from the LNAs is natural and cannot be optimized. This is in contrast to the prior work \cite{mezghani2007,mezghaniisit2007,mezghaniisit2008,mezghaniisit2009,koch,mezghani_itg_2010,koch1,hea14,Studer_2016,Jacobsson_2017,jianhua2014high,jacobsson2015one,juncil2015near,mollen2016performance_WSA,mollen2016performance,Yongzhisam,YongzhiGlobecom,Yongzhiuplink} focusing on uncoupled antenna arrays and suggesting the effectiveness of mid-rise one-bit sampling only in the low to medium SNR regime with a minimum of $2/\pi$ SNR loss.

Through numerical examples, we illustrate our findings regarding the performance of dense arrays with 1-bit ADCs and DACs. We also draw conclusions on the power-efficient implementation of tightly spaced antenna arrays with one-bit DAC.

\subsection{Notation}
Vectors and matrices are denoted by lower and upper case italic bold letters.  The operators $(\bullet)^\mathrm {T}$, $(\bullet)^\mathrm {H}$, $\textrm{tr}(\bullet)$ and $(\bullet)^*$ stand for transpose, Hermitian (conjugate transpose), trace, and complex conjugate, respectively.  The term ${\bf I}_M$ represents the identity matrix of size $M$. The vector $\boldsymbol{x}_i$ denotes the $i$-th column of a  matrix $\B{X}$ and $\left[\B{X}\right]_{i,j}$ denotes the ($i$th, $j$th) element, while $x_i$ is the $i$-th element of the vector $\B{x}$. We represent the Kronecker product of vectors and matrices by the operators  "$\otimes$". Additionally, $\textrm{Diag}(\boldsymbol{B})$ denotes a diagonal matrix containing only the diagonal elements of $\boldsymbol{B}$.
Further, we define $\B{C}_x={\rm E}[\B{x}\B{x}^{\rm H}] - {\rm E}[\B{x}]{\rm E}[\B{x}^{\rm H}]$ as the covariance matrix of $\B{x}$ and $\B{C}_{xy}$ as $\mathrm {E}[\B{x}\B{y}^{\rm H}]$.

\section{Dense Antenna Array Models}
The model for the dense antenna array used in this paper is in line with the prior work \cite{williams2019communication} and is based on power conservation arguments. We first derive a condition on any admissible model for densely spaced antenna architecture. The condition implies certain limitations on the achievable embedded radiation pattern of the array elements. We then argue that the widely used cosine based radiation pattern complies with this condition.

\subsection{Condition for power conservation at the far-field}

Consider a quasi-continuous infinite unidirectional antenna surface with element far-field effective area (i.e., embedded  pattern) $A_{e}(\theta, \varphi)$   for the elevation (aspect) and azimuth angles $\theta$ and $\varphi$ for all element while neglecting the edge effects. The far-field array steering vector of an $\sqrt{M} \times \sqrt{M}$ uniform linear array is
\begin{equation}
\begin{aligned}    
   & \boldsymbol{a}\left( \theta, \varphi,f \right)\\
& ~~    =\begin{bmatrix}  1 \\ e^{-2\pi {\rm j} \frac{a}{\lambda}  \sin \theta \sin \varphi } \\ \vdots  \\   e^{-2\pi {\rm j} \frac{a}{\lambda} (\sqrt{M}-1) \sin \theta \sin \varphi}  \end{bmatrix}  \otimes  \begin{bmatrix}  1 \\ e^{-2\pi {\rm j} \frac{a}{\lambda}  \sin \theta \cos \varphi } \\ \vdots  \\   e^{-2\pi {\rm j} \frac{a}{\lambda} (\sqrt{M}-1) \sin \theta \cos \varphi}  \end{bmatrix},
    \end{aligned}  
    \label{array_response}
\end{equation}
with $a$ being the element spacing and $\lambda$ is the wavelength. 

When dealing with spatial oversampling, i.e., $a \leq  \lambda/2$, it is crucial to  distinguish between extrinsic (environmental) and intrinsic (device) noise sources. At environmental  temperature $T$, the 
isotropic thermal radiation density for  one polarization is given by
\begin{equation}
N_{f}(f, T)=\frac{h f^{3}}{c^{2}} \frac{1}{e^{h f / N_0}-1} \stackrel{h f\ll N_0 }{\approx} \frac{N_0}{\lambda^2},  \quad \lambda=c_0/f 
\nonumber
\end{equation}
where the approximation  holds for the radio spectrum at ordinary temperatures. Consequently, the spatial covariance matrix associated with the environmental (i.e., extrinsic) noise is given by 
\begin{equation}
\begin{aligned}
\B{C}_{\rm n}(f)\!\!\!\!\!\!\!\!\!\!\!\!& 
\\
&=N_f(f,T) \!\!\int_0^{\frac{\pi}{2}}\!\!\!\! \int_{-\pi}^{\pi} \!\!   A_e(\theta,\varphi)  \boldsymbol{a} (\theta,\varphi,f) \boldsymbol{a} (\theta,\varphi,f)^{\rm H}   
{\rm sin}\theta {\rm d} \varphi {\rm d} \theta \\
&=N_0 \!\!\int_0^{\frac{\pi}{2}}\!\!\!\! \int_{-\pi}^{\pi} \!\! \frac{1}{\lambda^2}  A_e(\theta,\varphi)  \boldsymbol{a} (\theta,\varphi,f) \boldsymbol{a} (\theta,\varphi,f)^{\rm H}   
{\rm sin}\theta {\rm d} \varphi {\rm d} \theta \\
&\triangleq  N_0 \B{B}(f).
\end{aligned}
\label{matrix_B}
\end{equation}
We call the matrix $\B{B}(f)$ the coupling matrix.
Due to the Fourier structure of the array response in (\ref{array_response}), one can notice that the matrix $\B{B}(f)$ tend to be low rank  in case of  spatial oversampling with $a\ll \lambda$ and so does the extrinsic noise covariance matrix. This is due to the fact that $\boldsymbol{a} (\theta,\varphi,f)$  spans only a small portion of the $M$-dimensional space if $a\ll \lambda$. Based on the equipartition of energy principle, the average extrinsic energy associated with each complex degree of freedom received from the ambient environment  cannot exceed $N_0$, therefore we must have the property:
\begin{lemma}
Any time invariant antenna array fulfils $\B{B}(f) \preccurlyeq {\bf I} $.
\end{lemma}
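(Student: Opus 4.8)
\emph{Proof plan.} Write the claim in variational form. Since $\B{B}(f)$ is an integral over $(\theta,\varphi)\in[0,\tfrac{\pi}{2}]\times[-\pi,\pi]$ of the rank-one kernels $\tfrac{1}{\lambda^{2}}A_e(\theta,\varphi)\,\boldsymbol{a}(\theta,\varphi,f)\boldsymbol{a}(\theta,\varphi,f)^{\rm H}\sin\theta\succcurlyeq 0$, it is Hermitian positive semidefinite, so only the top eigenvalue is at stake and $\B{B}(f)\preccurlyeq{\bf I}$ is equivalent to
\begin{equation}
\boldsymbol{w}^{\rm H}\B{C}_{\rm n}(f)\,\boldsymbol{w}=N_0\,\boldsymbol{w}^{\rm H}\B{B}(f)\,\boldsymbol{w}\le N_0
\quad\text{for all }\boldsymbol{w}\in\mathbb{C}^{M},\ \norm{\boldsymbol{w}}=1.
\nonumber
\end{equation}
The plan is to read the left-hand side as the thermal-noise power collected by a \emph{single} complex spatial mode of the aperture and to upper bound it by $N_0$ using exactly the equipartition (Rayleigh--Jeans) argument quoted just above the statement.

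First I would fix a unit vector $\boldsymbol{w}$ and note that $\boldsymbol{w}^{\rm H}\boldsymbol{n}$ is the signal obtained at one port of a lossless, norm-preserving combining network realizing the functional $\boldsymbol{w}^{\rm H}(\cdot)$ (e.g.\ the first output of a unitary $M\times M$ network whose first row is $\boldsymbol{w}^{\rm H}$). Because every entry of $\boldsymbol{a}(\theta,\varphi,f)$ has unit modulus and $\norm{\boldsymbol{w}}=1$, the composite device is a \emph{single-port} antenna whose embedded far-field effective area is $A_e^{\boldsymbol{w}}(\theta,\varphi)\triangleq A_e(\theta,\varphi)\,|\boldsymbol{w}^{\rm H}\boldsymbol{a}(\theta,\varphi,f)|^{2}$, so that, using $N_f(f,T)=N_0/\lambda^{2}$,
\begin{equation}
\boldsymbol{w}^{\rm H}\B{C}_{\rm n}(f)\,\boldsymbol{w}=\frac{N_0}{\lambda^{2}}\int_{0}^{\frac{\pi}{2}}\!\!\int_{-\pi}^{\pi}A_e^{\boldsymbol{w}}(\theta,\varphi)\,\sin\theta\,{\rm d}\varphi\,{\rm d}\theta .
\nonumber
\end{equation}
Invoking the aperture--solid-angle relation for a passive antenna facing a thermal half-space, namely $\int_{0}^{\pi/2}\!\int_{-\pi}^{\pi}A_e^{\boldsymbol{w}}(\theta,\varphi)\sin\theta\,{\rm d}\varphi\,{\rm d}\theta\le\lambda^{2}$ (with equality in the lossless case), yields $\boldsymbol{w}^{\rm H}\B{C}_{\rm n}(f)\boldsymbol{w}\le N_0$, and hence $\B{B}(f)\preccurlyeq{\bf I}$.

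The step that needs the most care — indeed the entire physical content of the statement — is the bound $\int A_e^{\boldsymbol{w}}\sin\theta\,{\rm d}\varphi\,{\rm d}\theta\le\lambda^{2}$ holding for \emph{every} admissible unit weight, because this is precisely where mutual coupling is indispensable: if one simply multiplied a fixed (uncoupled) cosine element pattern by the array factor $|\boldsymbol{w}^{\rm H}\boldsymbol{a}|^{2}$ at sub-wavelength spacing $a<\lambda/2$, the integral would exceed $\lambda^{2}$, corresponding to a lossless super-directive antenna, which is non-physical. The correct reading is therefore that the embedded pattern $A_e$ of a realizable dense array is constrained by coupling so that the far-field power radiated — equivalently, by reciprocity, captured — under any unit-norm port excitation never exceeds what a lossless feed can deliver, which normalizes to $\lambda^{2}$ per polarization; this far-field power balance, not a product of separate element and array patterns, is what the statement records. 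In the write-up I would present the equipartition/black-body version as the clean argument and note that one may equivalently close it on transmit: driving the $M$ ports with a unit-norm current vector radiates a field whose total power over the hemisphere is proportional to $\int A_e|\boldsymbol{w}^{\rm H}\boldsymbol{a}|^{2}\sin\theta\,{\rm d}\varphi\,{\rm d}\theta$, and passivity of the array forbids this from exceeding the delivered power, i.e.\ the normalized value $1$ after identifying $\B{B}(f)$ with the resulting normalized radiation operator.
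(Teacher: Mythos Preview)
Your approach is physically sound but takes a quite different route from the paper. You argue variationally, reducing $\boldsymbol{w}^{\rm H}\B{B}(f)\boldsymbol{w}\le 1$ to the single-port aperture bound $\int A_e^{\boldsymbol{w}}\sin\theta\,{\rm d}\varphi\,{\rm d}\theta\le\lambda^2$ applied to the composite antenna obtained by lossless unitary port-combining, and you correctly flag that the whole content sits in the physical realizability of the embedded pattern $A_e$ under coupling. The paper instead sidesteps the thermodynamic and aperture-bound machinery entirely: it identifies $\B{B}(f)$ with the power-transfer matrix from the antenna $M$-port (impedance matrix $\B{Z}(f)$) into resistive loads $R_0$, namely
\[
\B{B}(f)=(R_0{\bf I}+\B{Z})^{-1}\,4R_0\,\mathcal{R}\{\B{Z}\}\,(R_0{\bf I}+\B{Z})^{-{\rm H}},
\]
and then closes with the purely algebraic identity
\[
(R_0{\bf I}+\B{Z})(R_0{\bf I}+\B{Z})^{\rm H}-4R_0\,\mathcal{R}\{\B{Z}\}=(R_0{\bf I}-\B{Z})(R_0{\bf I}-\B{Z})^{\rm H}\succcurlyeq 0,
\]
from which $\B{B}(f)\preccurlyeq{\bf I}$ is immediate. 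Your route makes the equipartition origin transparent but leans on the scalar aperture bound and on the realizability of a lossless unitary combiner as external black boxes --- both true, but you are effectively invoking the very passivity statement you set out to prove, just at the single-port level. The paper's route concentrates all the physics in one place (the Nyquist/circuit representation of $\B{B}(f)$) and then finishes with a two-line matrix identity that requires no further physical assumption, which is more self-contained and arguably sharper.
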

\begin{proof}
The coupling matrix $\B{B}(f)$ is related to the antenna impedance $\B{Z}$ (or alternatively admittance matrix) by calculating the effective  covariance of the noise delivered by the the antenna system  to $M$ resistive loads with value $R_0$  as shown in Fig.~\ref{fig:ciruit} \cite{JAP1955,Ivrlac:2010_2} 
\begin{equation}
\begin{aligned}
\B{B}(f)=(R_0 {\bf I}+\B{Z}(f))^{-1} 4 \mathcal{R}\{\B{Z}(f)\}R_0 (R_0 {\bf I}+\B{Z}(f))^{-\rm H}.
\label{mat_frac}
\end{aligned}
\end{equation}
Then, we compare the denominator and numerator of the matrix fraction in (\ref{mat_frac})
\begin{equation}
\begin{aligned}
&(R_0 {\bf I}+\B{Z}(f))(R_0 {\bf I}+\B{Z}(f))^{\rm H}- 4 \mathcal{R}\{\B{Z}(f)\}R_0  \\
&\quad \quad \quad \quad  \quad  = (R_0 {\bf I}-\B{Z}(f))(R_0 {\bf I}-\B{Z}(f))^{\rm H} \curlyeqsucc
 {\bf 0},
\end{aligned}
\end{equation}
leading to the property $\B{B}(f)\preccurlyeq {\bf I}$.
\end{proof}
A possible admissible effective area that is compatible with the passivity condition  $\B{B}(f)\preccurlyeq {\bf I}$ is the cosine-shaped pattern (know as the normal gain with  uniformly illuminated aperture, c.f. Fig.~\ref{fig:cos_pattern})
\begin{equation}
 A_{e}(\theta, \varphi)=a^2 \cos \theta.
\end{equation}
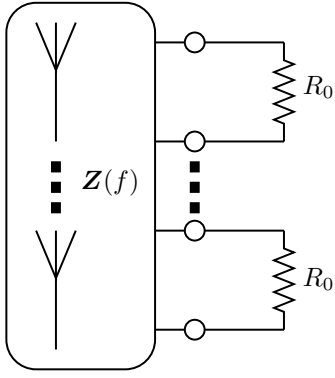
\begin{figure}
    \centering
\tikzset{every picture/.style={line width=0.75pt}} 
\begin{tikzpicture}[x=0.75pt,y=0.75pt,yscale=-1,xscale=1]
\draw   (90,40) -- (100,64) -- (110,40) (100,40) -- (100,100) ;
\draw   (165,100) .. controls (165,97.24) and (167.24,95) .. (170,95) .. controls (172.76,95) and (175,97.24) .. (175,100) .. controls (175,102.76) and (172.76,105) .. (170,105) .. controls (167.24,105) and (165,102.76) .. (165,100) -- cycle ;
\draw   (215,50) -- (215,59) -- (220,61) -- (210,65) -- (220,69) -- (210,73) -- (220,77) -- (210,81) -- (220,85) -- (210,89) -- (215,91) -- (215,100) ;
\draw   (215,145) -- (215,154) -- (220,156) -- (210,160) -- (220,164) -- (210,168) -- (220,172) -- (210,176) -- (220,180) -- (210,184) -- (215,186) -- (215,195) ;
\draw   (165,145) .. controls (165,142.24) and (167.24,140) .. (170,140) .. controls (172.76,140) and (175,142.24) .. (175,145) .. controls (175,147.76) and (172.76,150) .. (170,150) .. controls (167.24,150) and (165,147.76) .. (165,145) -- cycle ;
\draw   (165,195) .. controls (165,192.24) and (167.24,190) .. (170,190) .. controls (172.76,190) and (175,192.24) .. (175,195) .. controls (175,197.76) and (172.76,200) .. (170,200) .. controls (167.24,200) and (165,197.76) .. (165,195) -- cycle ;
\draw   (165,50) .. controls (165,47.24) and (167.24,45) .. (170,45) .. controls (172.76,45) and (175,47.24) .. (175,50) .. controls (175,52.76) and (172.76,55) .. (170,55) .. controls (167.24,55) and (165,52.76) .. (165,50) -- cycle ;
\draw    (175,50) -- (215,50) ;
\draw    (175,100) -- (215,100) ;
\draw    (175,145) -- (215,145) ;
\draw    (175,195) -- (215,195) ;
\draw   (90,145) -- (100,169) -- (110,145) (100,145) -- (100,205) ;
\draw [line width=3.75]  [dash pattern={on 4.22pt off 3.52pt}]  (170,110) -- (170,140) ;
\draw   (75,45) .. controls (75,36.72) and (81.72,30) .. (90,30) -- (135,30) .. controls (143.28,30) and (150,36.72) .. (150,45) -- (150,200) .. controls (150,208.28) and (143.28,215) .. (135,215) -- (90,215) .. controls (81.72,215) and (75,208.28) .. (75,200) -- cycle ;
\draw    (150,50) -- (165,50) ;
\draw    (150,100) -- (165,100) ;
\draw    (150,145) -- (165,145) ;
\draw    (150,195) -- (165,195) ;
\draw [line width=3.75]  [dash pattern={on 4.22pt off 3.52pt}]  (100,110) -- (100,140) ;
\draw (223,65.4) node [anchor=north west][inner sep=0.75pt]    {$R_{0}$};
\draw (223,162.4) node [anchor=north west][inner sep=0.75pt]    {$R_{0}$};
\draw (111,112.4) node [anchor=north west][inner sep=0.75pt]    {$\boldsymbol{Z}(f)$};
\end{tikzpicture}
    \caption{Circuit diagram of a MIMO antenna system with resistive loads}
    \label{fig:ciruit}
\end{figure}
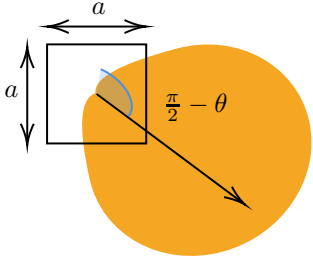
\begin{figure}
    \centering

\tikzset{every picture/.style={line width=0.75pt}} 

\begin{tikzpicture}[x=0.75pt,y=0.75pt,yscale=-1,xscale=1]

\draw  [color={rgb, 255:red, 245; green, 166; blue, 35 }  ,draw opacity=1 ][fill={rgb, 255:red, 245; green, 166; blue, 35 }  ,fill opacity=1 ] (119.52,61.59) .. controls (119.52,61.59) and (119.52,61.59) .. (119.52,61.59) .. controls (149.82,53.56) and (179.45,69.79) .. (185.7,97.85) .. controls (191.96,125.91) and (172.46,155.18) .. (142.16,163.21) .. controls (111.86,171.25) and (82.23,155.02) .. (75.98,126.95) .. controls (70.07,100.42) and (71.07,86.1) .. (79,84) .. controls (77.37,76.66) and (90.88,69.19) .. (119.52,61.59) -- cycle ;
\draw   (54,59) -- (104,59) -- (104,109) -- (54,109) -- cycle ;
\draw    (56,50) -- (102,50) ;
\draw [shift={(104,50)}, rotate = 180] [color={rgb, 255:red, 0; green, 0; blue, 0 }  ][line width=0.75]    (10.93,-3.29) .. controls (6.95,-1.4) and (3.31,-0.3) .. (0,0) .. controls (3.31,0.3) and (6.95,1.4) .. (10.93,3.29)   ;
\draw [shift={(54,50)}, rotate = 0] [color={rgb, 255:red, 0; green, 0; blue, 0 }  ][line width=0.75]    (10.93,-3.29) .. controls (6.95,-1.4) and (3.31,-0.3) .. (0,0) .. controls (3.31,0.3) and (6.95,1.4) .. (10.93,3.29)   ;
\draw    (44,108) -- (44,62) ;
\draw [shift={(44,60)}, rotate = 450] [color={rgb, 255:red, 0; green, 0; blue, 0 }  ][line width=0.75]    (10.93,-3.29) .. controls (6.95,-1.4) and (3.31,-0.3) .. (0,0) .. controls (3.31,0.3) and (6.95,1.4) .. (10.93,3.29)   ;
\draw [shift={(44,110)}, rotate = 270] [color={rgb, 255:red, 0; green, 0; blue, 0 }  ][line width=0.75]    (10.93,-3.29) .. controls (6.95,-1.4) and (3.31,-0.3) .. (0,0) .. controls (3.31,0.3) and (6.95,1.4) .. (10.93,3.29)   ;
\draw    (79,84) -- (152.4,138.8) ;
\draw [shift={(154,140)}, rotate = 216.75] [color={rgb, 255:red, 0; green, 0; blue, 0 }  ][line width=0.75]    (10.93,-3.29) .. controls (6.95,-1.4) and (3.31,-0.3) .. (0,0) .. controls (3.31,0.3) and (6.95,1.4) .. (10.93,3.29)   ;
\draw  [draw opacity=0][fill={rgb, 255:red, 74; green, 144; blue, 226 }  ,fill opacity=0.22 ] (81.09,71.52) .. controls (82.59,72.18) and (84.08,72.99) .. (85.55,73.95) .. controls (94.67,79.89) and (99.13,89.2) .. (95.51,94.76) .. controls (95.39,94.94) and (95.26,95.12) .. (95.12,95.29) -- (79,84) -- cycle ; \draw  [color={rgb, 255:red, 74; green, 144; blue, 226 }  ,draw opacity=1 ] (81.09,71.52) .. controls (82.59,72.18) and (84.08,72.99) .. (85.55,73.95) .. controls (94.67,79.89) and (99.13,89.2) .. (95.51,94.76) .. controls (95.39,94.94) and (95.26,95.12) .. (95.12,95.29) ;

\draw (111,82.4) node [anchor=north west][inner sep=0.75pt]    {$\frac{\pi }{2} -\theta $};
\draw (75,36.4) node [anchor=north west][inner sep=0.75pt]    {$a$};
\draw (31,78.4) node [anchor=north west][inner sep=0.75pt]    {$a$};

\end{tikzpicture}

    \caption{Unit element of the array: for a cosine shaped pattern $A_e=(\theta, \varphi)=a^2 \cos \theta$}
    \label{fig:cos_pattern}
\end{figure}
\begin{figure}
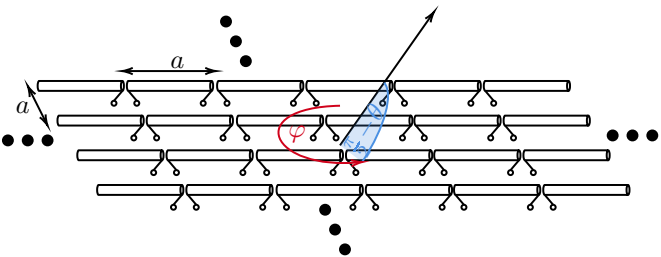

    \centering
\tikzset{every picture/.style={line width=0.75pt}} 

    \caption{Connected dipole array}
    \label{fig:connected_dipoles}
\end{figure}
A practical implementation using connected dipoles that approximately achieves this behavior can be found in \cite{Neto2006} and is illustrated in Fig.~\ref{fig:connected_dipoles}.   
\begin{lemma}
The cosine pattern  $A_e(\theta,\varphi)= a^2 \cos \theta $ fulfils the condition  $\B{B}(f) \preccurlyeq {\bf I} $.
\end{lemma}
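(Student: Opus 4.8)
The plan is to check the inequality directly from the integral representation (\ref{matrix_B}) of $\B{B}(f)$, tested against an arbitrary $\B{v}\in\mathbb{C}^{M}$. Substituting $A_e(\theta,\varphi)=a^{2}\cos\theta$ turns $\B{B}(f)\preccurlyeq{\bf I}$ into the scalar inequality
\begin{equation*}
\B{v}^{\rm H}\B{B}(f)\B{v}=\frac{a^{2}}{\lambda^{2}}\int_{0}^{\pi/2}\!\!\int_{-\pi}^{\pi}\cos\theta\,\bigl|\B{v}^{\rm H}\boldsymbol{a}(\theta,\varphi,f)\bigr|^{2}\sin\theta\,{\rm d}\varphi\,{\rm d}\theta\ \le\ \B{v}^{\rm H}\B{v}.
\end{equation*}
By the Kronecker/Fourier structure of (\ref{array_response}), $\B{v}^{\rm H}\boldsymbol{a}(\theta,\varphi,f)$ equals a two-dimensional trigonometric polynomial $Q(u,w)=\sum_{m,n}v_{m,n}^{*}\,e^{-2\pi{\rm j}(mu+nw)}$ (the coefficients being the entries of $\B{v}$, indexed as the antennas with $m,n\in\{0,\dots,\sqrt{M}-1\}$) evaluated at the normalized spatial frequencies $u=\tfrac{a}{\lambda}\sin\theta\sin\varphi$ and $w=\tfrac{a}{\lambda}\sin\theta\cos\varphi$.

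Next I would change variables from $(\theta,\varphi)$ to $(u,w)$. Viewing $(u,w)$ in polar form with radius $\rho=\tfrac{a}{\lambda}\sin\theta\in[0,a/\lambda]$ and angle $\varphi$, we get ${\rm d}u\,{\rm d}w=\rho\,{\rm d}\rho\,{\rm d}\varphi$ and $\rho\,{\rm d}\rho=\tfrac{a^{2}}{\lambda^{2}}\sin\theta\cos\theta\,{\rm d}\theta$; hence the cosine pattern, combined with the $a^{2}/\lambda^{2}$ prefactor and the $\sin\theta$ of the solid-angle element, collapses exactly to the flat planar area element, $\tfrac{a^{2}}{\lambda^{2}}\cos\theta\sin\theta\,{\rm d}\theta\,{\rm d}\varphi={\rm d}u\,{\rm d}w$. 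As $(\theta,\varphi)$ sweeps $[0,\pi/2]\times[-\pi,\pi]$, the pair $(u,w)$ covers once the ``visible'' disk $\mathcal{D}=\{u^{2}+w^{2}\le(a/\lambda)^{2}\}$, so the left-hand side above is simply $\int_{\mathcal{D}}|Q(u,w)|^{2}\,{\rm d}u\,{\rm d}w$.

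Finally I would invoke Parseval. Since $Q$ is $1$-periodic in each of $u$ and $w$, the exponentials $e^{2\pi{\rm j}(mu+nw)}$ are orthonormal over the unit cell $[-\tfrac12,\tfrac12]^{2}$, whence $\int_{[-1/2,1/2]^{2}}|Q|^{2}\,{\rm d}u\,{\rm d}w=\sum_{m,n}|v_{m,n}|^{2}=\B{v}^{\rm H}\B{v}$. Under the standing assumption $a\le\lambda/2$ the disk $\mathcal{D}$ of radius $a/\lambda\le\tfrac12$ lies inside that unit cell, and since $|Q|^{2}\ge0$, discarding the rest of the cell only lowers the integral: $\B{v}^{\rm H}\B{B}(f)\B{v}=\int_{\mathcal{D}}|Q|^{2}\le\int_{[-1/2,1/2]^{2}}|Q|^{2}=\B{v}^{\rm H}\B{v}$, which is the claim. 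The step I expect to require the most care is the change of variables --- verifying that the Jacobian of $(\theta,\varphi)\mapsto(u,w)$ is precisely what converts the $\cos\theta$ embedded pattern into the uniform measure on $\mathcal{D}$ --- together with the geometric fact that the visible disk embeds in a fundamental domain of the integer lattice, which is exactly where $a\le\lambda/2$ enters (for $a>\lambda/2$ the disk overflows the cell, periodicity of $Q$ causes double counting, and the bound would instead have to absorb aliasing/grating-lobe contributions). It is also worth remarking that this argument re-derives, constructively, the passivity bound obtained abstractly in Result~1.
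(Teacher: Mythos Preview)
Your proof is correct and follows essentially the same route as the paper's: the same change of variables to the spatial-frequency disk $u^{2}+w^{2}\le(a/\lambda)^{2}$, the same enlargement to the unit cell $[-\tfrac12,\tfrac12]^{2}$, and the same appeal to Parseval. You are somewhat more explicit than the paper about the Jacobian computation and about where the hypothesis $a\le\lambda/2$ is used, but the argument is otherwise identical.
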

\begin{proof}
Taking the case that $A_e(\theta,\varphi)= a^2 \cos \theta $, we prove that  $ \B{B}(f) \preccurlyeq {\bf I}$, which is equivalent to $ \B{f}^{\rm T}\B{B}(f) \B{f}^* \leq \|\B{f}\|^2_2,~ \forall \B{f}$. To this end,   we use the substitution 
\begin{align}
x&= \frac{a}{\lambda} \sin \theta \cos \varphi   \\
  y&=   \frac{a}{\lambda} \sin \theta \sin \varphi.
\end{align}
Then, we evaluate the quadratic form 
\begin{equation}
\begin{aligned}
 \B{f}^{\rm T}\B{B}(f) \B{f}^*& = 
 \int_0^{\frac{\pi}{2}} \!\!\int_{-\pi}^{\pi}   |\B{f}^{\rm T}  \boldsymbol{a} (\theta,\varphi,f)|^2 {\rm cos}\theta {\rm sin}\theta {\rm d} \varphi {\rm d}  \theta \\
&= \iint \limits_{x^2+y^2\leq  \frac{a^2}{\lambda^2}}  |\B{f}^{\rm T}  \tilde{\B{a}} (x,y,f)|^2 {\rm d} x {\rm d}   y   \\
&\leq   \int_{-\frac{1}{2}}^{\frac{1}{2}} \!\!\int_{-\frac{1}{2}}^{\frac{1}{2}}   |\B{f}^{\rm T}  \tilde{\B{a}} (x,y,f)|^2 {\rm d} x {\rm d}   y  \\
&=\|\B{f}\|^2_2 \quad \textrm{(Parseval's theorem).}
 \end{aligned}
 \label{proof_parseval}
\end{equation}
\end{proof}
Explicitly, for the case with $ A_e(\theta,\varphi) =a^2 \cos \theta $, the kernel matrix $\B{B}(f)$ is  obtained from the evaluation of the spherical  integral in (\ref{matrix_B}) as follows \cite{williams2019communication}
\begin{equation}
\begin{aligned}
&[\B{B}(f)]_{k+\sqrt{M}(\ell-1), m+\sqrt{M}(n-1) }  = \\
 & \begin{cases}
    \frac{\pi a^2}{\lambda^2}, & \text{for } k=m \text{ and } \ell=n \\
    \frac{a}{\lambda} \frac{ J_1(\frac{2\pi a}{\lambda}\sqrt{(k-m)^2+(\ell-n)^2})}{\sqrt{(k-m)^2+(\ell-n)^2}}, & \text{otherwise, }  
  \end{cases}
\end{aligned}
\end{equation}
where $J_1(\bullet)$ is the  Bessel function of first kind and first order. Finally, we also introduce the following quantity related to the diagonal entries of the matrix $\B{B}(f)$
\begin{equation}
\begin{aligned}
\gamma=\int_0^{\frac{\pi}{2}} \int_{-\pi}^{\pi} \frac{1}{a^2}  A_e(\theta,\varphi)  
{\rm sin}\theta {\rm d} \varphi {\rm d} \theta = \frac{\lambda^2}{a^2} [\B{B}(f)]_{j,j},
\end{aligned}
\end{equation}
which is equal $\pi$ for the cosine pattern. 
\subsection{Receive array and noise correlation}
Consider the case where the antenna array is operated in the receiving mode. The total effective noise is the sum of the contributions from  environmental  noise as characterized in (\ref{matrix_B}), and the intrinsic noise of the device that is dominated usually by the noise sources of the low noise amplifier (LNA). Denoting the noise figure (i.e., the noise enhancement factor) of the LNA by $N_F$, and assuming uncorrelated device noise, we obtain using (\ref{matrix_B})  the total noise covariance matrix 
\begin{equation}
\begin{aligned}
\B{C}_{\rm n,Tot}(f)\!\!\!\!\!\!\!\!\!\!\!\!& 
\\
&=\underbrace{N_0 \!\!\int_0^{\frac{\pi}{2}}\!\!\!\! \int_{-\pi}^{\pi} \!\! \frac{1}{\lambda^2}  A_e(\theta,\varphi)  \boldsymbol{a} (\theta,\varphi,f) \boldsymbol{a} (\theta,\varphi,f)^{\rm H}   
{\rm sin}\theta {\rm d} \varphi {\rm d} \theta}_{\rm extrinsic~noise} \\
&\hspace{5.5cm} +\underbrace{(N_F-1)N_0{\bf I}}_{\rm device~noise}, \\
&=  N_0 \B{B}(f)+ (N_F-1)N_0{\bf I},
\end{aligned}
\label{noise_covariance}
\end{equation}
where $N_F>1$ represents the noise figure of the receiver.  The assumption of uncorrelated device noise, although being not exact,is quite reasonable and provide meaningful prediction of the performance as shown later.  This uncorrelated intrinsic/device noise could also include the Ohmic losses of the antenna structure. Again, since $ \B{B}(f) \preccurlyeq {\bf I}$,  we have $ \B{C}_{\rm n,Tot}(f)  \preccurlyeq N_FN_0 {\bf I}$.



\subsection{Transmit array and radiated power}
When operating as a transmit array, we use the following  relationship between the element antenna pattern and its effective area which follows from thermodynamical arguments \cite{Rohlfs2004}
\begin{equation}
 G_e(\theta, \varphi)= \frac{4 \pi}{\lambda^2} A_e(\theta, \varphi). 
\end{equation}
When an excitation vector $\B{f}$ is applied to the antenna array, the total radiated power spectral density is obtained by integrating the radiated signal over a distance of  1 meter 
\begin{equation}
\begin{aligned}
\Phi_{\rm R,Tot}(f)
=& \frac{1}{4 \pi} \int_0^{\frac{\pi}{2}} \!\!\int_{-\pi}^{\pi}    G_e(\theta,\varphi) |\B{f}^{\rm T}  \boldsymbol{a} (\theta,\varphi,f)|^2
{\rm sin}\theta {\rm d} \varphi {\rm d} \theta  \\
=& \int_0^{\frac{\pi}{2}} \!\!\int_{-\pi}^{\pi}  \frac{1}{\lambda^2}  A_e(\theta,\varphi) |\B{f}^{\rm T}  \boldsymbol{a} (\theta,\varphi,f)|^2
{\rm sin}\theta {\rm d} \varphi {\rm d} \theta  \\
= &  \B{f}^{\rm T}  \B{B}(f) \B{f}^{*}\\
\leq& \|\B{f}\|_2^2,
\end{aligned}
\end{equation}
where $\B{B}(f)$ is defined in (\ref{matrix_B}) and the last step follows from $\B{B}(f) \preccurlyeq {\bf I}$. The radiated power is less or equal the available power $\|\B{f}\|_2^2$ and the equality holds only in the case $\B{B}(f)={\bf I}$, i.e., with perfect impedance matching conditions.  The radiation pattern $\B{f}^{\rm T}  \boldsymbol{a} (\theta,\varphi,f)$ is the truncated discrete-space Fourier transform (DTFT) of the beamforming vector $\B{f}$. 
We notice therefore the spatial smoothing and filtering property of dense antenna array, as only the lowest spatial frequencies $x^2+y^2\leq  \frac{a^2}{\lambda^2} \leq 1$ can propagate. Thus, spatial oversampling enables the reduction -- with a 2D cylindrical filter shape in the case of the cosine pattern (c.f. (\ref{proof_parseval}))--  of potential spurious radiations in the beamforming vector $\B{f}$ due for instance to hardware impairments. Since only the lower angular frequencies $x^2+y^2\leq  \frac{a^2}{\lambda^2} \leq 1$ can be radiated, any excitation outside this band will just generate reactive power that remains stored and confined very close (within few wavelengths) to the array surface.   Consequently, constant envelope transmission as well as low resolution D/A-converters can be used more appropriately in combination with densely spaced antenna arrays as suggested in prior work \cite{Krieger2013}.
\subsection{Far-field channel model}
Consider $K$ users in the far-field of the antenna array. The individual channels (assumed to be narrowband for simplicity) can be written  a superposition of $Q$ multi-path components 
\begin{equation}
\begin{aligned}
\B{h}_k &=\sum_{\ell=1}^Q s'_\ell \frac{1}{\lambda} \sqrt{A_e(\theta_\ell,\varphi_\ell)} \boldsymbol{a} (\theta_\ell,\varphi_\ell,f), 
\end{aligned}
\end{equation}
where $f=2\pi/\lambda$ is the carrier frequency.
In case of Rayleigh fading, i.e., rich scattering scenarios with isotropic angles of arrivals (for the uplink scenario) or departures (for the downlink scenario), the vector $\B{s}_k$ becomes independent and identically distributed (IID) Gaussian. In such a case, the channel covariance matrix is equal to $\B{B}(f)$ (c.f. (\ref{matrix_B}))
\begin{equation}
\begin{aligned}
{\rm E}[\B{h}_{k,\rm }\B{h}_{k,\rm }^{\rm H}] &=  \B{B}(f). 
\end{aligned}
\end{equation}
Accordingly, we can represent the channel vectors in the image space spanned by the matrix $\B{B}(f)$
\begin{equation}
\begin{aligned}
\B{h}_k &=  \B{B}(f)^{\frac{1}{2}} \B{s}_k, 
\end{aligned}
\end{equation}
with IID complex Gaussian vectors $\B{s}_k$. 
This shows again the significance of the coupling matrix $\B{B}(f)$ for the overall system model. 
\section{Spectral efficiency  with zero-forcing receiver and one-bit resolution ADCs}
We first consider the achievable rate infinite resolution ADCs and zero-forcing detection. The received signal is 
\begin{equation}
\begin{aligned}
\B{y} &= 
 \B{H}\B{x}+ \B{n},
\end{aligned}
\end{equation}
where $\B{n}$ is the effective noise vector having the covariance matrix given in (\ref{noise_covariance}), $\B{x}\in \mathbb{C}^K$ is the signal vector containing the transmitted signals from all the users with individual power values $\varepsilon_k$, and the channel matrix is defined as
\begin{equation}
\begin{aligned}
\B{H}=[\B{h}_1,\cdots,\B{h}_K].
\end{aligned}
\end{equation}
The individual achievable rates based on  zero-forcing equalizer  
\begin{equation}
\begin{aligned}
\B{G}=   (\B{H}^{\rm H}\B{H})^{-1} \B{H}^{\rm H},
\end{aligned}
\end{equation}
are given by  
\begin{equation}
\begin{aligned}
 R_k &= \log_2 \left(1+ \frac{{\varepsilon}_k}{[(\B{H}^{\rm H}\B{C}_{\rm n,Tot}^{-1}\B{H})^{-1}]_{kk}}\right)  \\
 &= \log_2 \left(1+ \frac{{\varepsilon}_k/N_0}{[(\B{S}^{\rm H}( {\bf I}+(N_F-1)\B{B}(f)^{-1}  \B{S})^{-1}]_{kk}}\right). 
\end{aligned}
\end{equation}
Since the matrix $\B{B}(f)$ is dominated by about $\frac{4a^2}{\lambda^2} M$ eigenvalues \cite{franceschetti_2017,williams2019communication} and $N_f$ is strictly larger than 1, the rate in a linear system mainly  depends on the effective aperture $\frac{4a^2}{\lambda^2} M$ rather than the number of antenna elements $M$ in dense arrays. In other words, half-wavelength spacing, i.e., $a=\lambda/2$, is nearly optimal in large antenna arrays with infinite resolution ADCs. This fact will be also apparent in the numerical results.  
\begin{figure}[h]
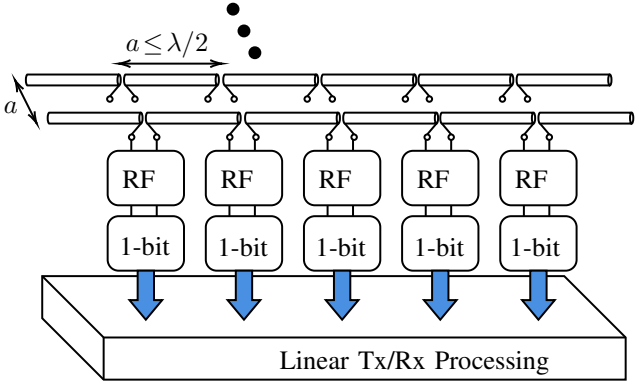

    \centering
\tikzset{every picture/.style={line width=0.75pt}} 

    \caption{Dense massive MIMO with 1-bit converters}
    \label{fig:1_bit_system}
\end{figure}


For the 1-bit case, shown in Fig.~\ref{fig:1_bit_system}, we resort to the Bussgang decomposition technique to derive a lower bound on the achievable rate. The rescaled quantized output is  
\begin{equation}
\B{y}_Q=  \frac{\B{D}^{\frac{1}{2}}}{\sqrt{2}} ( {\rm sign}(\mathcal{R}(\B{y}))+{\rm j} \cdot  {\rm sign}(\mathcal{I}(\B{y}))).
\end{equation}
where $\B{D}$ is a diagonal matrix containing the diagonal entries of $\B{C}_y$
\begin{equation}
\begin{aligned}
 \B{C}_y &=  \B{H} \B{P}\B{H}^{\rm H} +  \B{C}_{\rm n,Tot}, \quad \B{P}=\begin{bmatrix}
\varepsilon_1 & \cdots & 0 \\
 \vdots & \ddots &   \vdots \\
0 &  \cdots  & \varepsilon_K
\end{bmatrix}.
\end{aligned}
\end{equation}
The \emph{arcsine law} provides first the relationship between  covariance matrix  at the input  of the quantizer 
and the covariance matrix of quantized rescaled output
\begin{equation}
\begin{aligned}
\B{C}_{y_Q} &=\frac{2}{\pi}\B{D}^\frac{1}{2} {\rm arcsin} \left(\B{D}^{-\frac{1}{2}} \mathcal{R}(\B{C}_y) \B{D}^{-\frac{1}{2}}\right) \B{D}^\frac{1}{2} \\
&\quad \quad +{\rm j}\frac{2}{\pi}\B{D}^\frac{1}{2} {\rm arcsin} \left(\B{D}^{-\frac{1}{2}} \mathcal{I}(\B{C}_y) \B{D}^{-\frac{1}{2}}\right) \B{D}^\frac{1}{2},
\end{aligned}
\end{equation}
where the arcsine function is applied elementweise.
The effective noise covariance matrix can be obtained by subtracting the desired linear part of the Bussgang decomposition 
\begin{equation}
\begin{aligned}
\B{C}_{\rm n,Tot}' &=\B{C}_{y_Q} - \frac{2}{\pi}  \B{H}\B{P}\B{H}^{\rm H},
\end{aligned}
\end{equation}
which includes contributions from both the additive noise as well as from the quantization distortion. 

To further simplify the analysis, we resort to the following approximation under the assumption of small mutual correlations between the element  signals 
\begin{equation}
\begin{aligned}
\B{C}_{\rm n,Tot}' 
&\approx (1-\frac{2}{\pi}) \B{D} + \frac{2}{\pi}   \B{C}_{\rm n,Tot}.
\end{aligned}
\end{equation}
where we used the first order  approximation 
\begin{equation}
\begin{aligned}
\arcsin(x)  \approx  
  \begin{cases}
    \frac{\pi}{2}, & \text{for } x=1 \\
    x, & \text{for } |x| < 1.
  \end{cases}
\end{aligned}
\end{equation}
This approximation  leads to uncorrelated quantization noise (UQN) which is widely used in the literature. It holds however only  if the covariance matrix $\B{C}_y$ is dominantly diagonal. In other words, this approximation holds when the device noise dominates each individual element signal.

It is worth mentioning that the diagonal approximation assuming uncorrelated quantization noise might be unreasonable as shown in \cite{Mezghani2017} but can be made more realistic with sufficiently lower SNR per antenna (e.g. $\frac{a^2}{\lambda^2} \ll 1$). A lower bound on the mutual information  \cite{MezNos:Capacity-lower-bound:12} can be obtained by treating the distortion term as additive noise  
\begin{equation}
\begin{aligned}
 R_{k,\rm LB}^{\rm 1-bit}&= \log_2 \left( 1+ \frac{{\varepsilon}_k}{[(\frac{2}{\pi}\B{H}^{\rm H}\B{C}_{\rm n,Tot}'^{-1}\B{H})^{-1}]_{kk}}\right)    \\
 &\!\!\!\stackrel{\rm UQN}{\approx} \! \log_2 \!\! \left( 1+ \frac{{\varepsilon}_k}{[\B{H}^{\rm H}(\B{C}_{\rm n,Tot}+(\frac{\pi}{2}-1) \B{D} )^{-1}\B{H})^{-1}]_{kk}}\right) \!.    
\end{aligned}
\end{equation}
In the case of large number of isotropically distributed users,
we can make approximate the diagonal scaling matrix $\B{D}$, i.e., the diagonal entries of the covariance matrix $\B{C}_y$, as 
\begin{equation}
\begin{aligned}
\B{D} &\approx\left(  N_0+ \sum_k \varepsilon_k\right) {\rm Diag} \left(\B{B}(f)\right)+N_0(N_F-1) \\
&=\!\! \Bigg(\!(N_0\!\!+\!\!\!\sum_k \!\!{\varepsilon}_k) \frac{a^2}{\lambda^2}\!\! \underbrace{ \int_0^{\frac{\pi}{2}}\!\!\!\! \int_{-\pi}^{\pi} \!\! \!\! A_{e}(\theta,\varphi)
{\rm sin}\theta {\rm d} \varphi {\rm d} \theta}_{\triangleq \gamma ~(=\pi \rm ~ for ~the ~cosine ~ pattern) }\!+\!N_0(N\!_F\!-\!1)\!\!\Bigg)\!{\bf I} \\
&= \left((N_0 \!+\!  \sum_k {\varepsilon}_k)\frac{a^2}{\lambda^2} \gamma  + N_0 (N\!_F\!-\!1) \right) {\bf I}.
\end{aligned}
\end{equation}
We obtain therefore the approximation
\begin{equation}
\begin{aligned}
 &R_{k,\rm LB}^{\rm 1-bit} \stackrel{\rm UQN}{\approx} \log_2  \Bigg(1+\\
 &  \frac{{\varepsilon}_k/N_0}{[(\B{S}^{\rm H}\!( {\bf I}\!\! +\!\!(\frac{\pi}{2}(N\!\!_F\!-\!\!1)\!\!+\!\! ( \frac{\pi}{2} \!-\!1)\frac{a^2\gamma}{\lambda^2}(1\!\!+\!\!\sum\limits_k \! \frac{\varepsilon_k}{N_0}))\B{B}(\!f\!)^{\!-1})^{-1} \! \B{S})^{-1}]_{ \! kk}} \!\! \Bigg).
\end{aligned}
\end{equation}
Now we state the following asymptotic result.
\begin{lemma}
\label{result_uplink}
For infinitely dense arrays, the  SNR loss factor due the 1-bit quantization is 
\begin{equation}
\begin{aligned}
\lim_{a \longrightarrow 0} \frac{2^{R_{k,\rm LB}^{\rm 1-bit}}-1}{2^{R_{k}}-1} = \frac{N_F}{1+\frac{\pi}{2}(N\!_F\!-\!1)}.
\end{aligned}
\end{equation}
\end{lemma}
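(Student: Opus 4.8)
The plan is to start from the two closed-form (UQN) rate expressions already derived and reduce the claim to a single spectral statement about the coupling matrix $\B{B}(f)$ in the deep-oversampling limit. Writing $\B{h}_k=\B{B}(f)^{1/2}\B{s}_k$, both rates are instances of $R=\log_2\!\big(1+\tfrac{\varepsilon_k/N_0}{[(\B{S}^{\rm H}(\mathbf I+c\,\B{B}(f)^{-1})^{-1}\B{S})^{-1}]_{kk}}\big)$, with $c=c_1\triangleq N_F-1$ for the infinite-resolution rate and $c=c_2(a)\triangleq\tfrac{\pi}{2}(N_F-1)+(\tfrac{\pi}{2}-1)\tfrac{a^2\gamma}{\lambda^2}\big(1+\sum_k\varepsilon_k/N_0\big)$ for the one-bit lower bound. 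Hence
\[
\frac{2^{R_{k,\rm LB}^{\rm 1-bit}}-1}{2^{R_k}-1}
=\frac{\big[(\B{S}^{\rm H}(\mathbf I+c_1\B{B}(f)^{-1})^{-1}\B{S})^{-1}\big]_{kk}}
       {\big[(\B{S}^{\rm H}(\mathbf I+c_2(a)\B{B}(f)^{-1})^{-1}\B{S})^{-1}\big]_{kk}},
\]
and $c_2(a)\to c_2^\star\triangleq\tfrac{\pi}{2}(N_F-1)$ because $\tfrac{a^2\gamma}{\lambda^2}=[\B{B}(f)]_{jj}\to0$. It therefore suffices to show that, for each fixed $c>0$, $(1+c)^{-1}\big[(\B{S}^{\rm H}(\mathbf I+c\,\B{B}(f)^{-1})^{-1}\B{S})^{-1}\big]_{kk}$ converges to a limit $g_k$ \emph{independent of $c$} as $a\to0$ at fixed total aperture (so $M\to\infty$); the ratio then tends to $\tfrac{1+c_1}{1+c_2^\star}=\tfrac{N_F}{1+\frac{\pi}{2}(N_F-1)}$, the asserted loss factor.

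The heart of the matter is the spectrum of $\B{B}(f)$ as $a\to0$ with the physical area held fixed. Using $(\mathbf I+c\,\B{B}(f)^{-1})^{-1}=\B{B}(f)\big(\B{B}(f)+c\,\mathbf I\big)^{-1}$, it is enough to prove that $\B{B}(f)$ tends to an orthogonal projection $\B{\Pi}$ onto a \emph{fixed-dimensional} ``visible-region'' subspace $\mathcal V$. Two observations do this. First, $\mathrm{tr}\,\B{B}(f)=M\,[\B{B}(f)]_{jj}=\tfrac{\pi a^2}{\lambda^2}M$ is constant in $a$ (equal to $\tfrac{\pi}{\lambda^2}$ times the aperture area). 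Second, $\mathrm{tr}\,\B{B}(f)^2=\sum_{k,m}|[\B{B}(f)]_{km}|^2$, and passing the row sum to an integral (neglecting edge effects, as already assumed) and using $\int_0^\infty J_1(t)^2/t\,\mathrm dt=\tfrac12$ gives $\sum_m|[\B{B}(f)]_{km}|^2\to\tfrac{\pi a^2}{\lambda^2}$, i.e.\ $\mathrm{tr}\,\B{B}(f)^2\to\mathrm{tr}\,\B{B}(f)$. Since $\mathbf 0\preccurlyeq\B{B}(f)\preccurlyeq\mathbf I$ (Result~1), this forces $\sum_i\lambda_i(1-\lambda_i)\to0$, so the eigenvalues concentrate on $\{0,1\}$: there is a subspace $\mathcal V$ of fixed dimension $\approx\tfrac{\pi}{\lambda^2}\times(\text{aperture})$ on which $\B{B}(f)\to\mathbf I$, with $\B{B}(f)\to\mathbf 0$ on $\mathcal V^{\perp}$. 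This is exactly the ``$\B{B}(f)$ is dominated by about $\tfrac{4a^2}{\lambda^2}M$ eigenvalues'' fact used earlier; it may alternatively be quoted from \cite{williams2019communication,franceschetti_2017}.

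Granting $\B{B}(f)\to\B{\Pi}$, the rest is bookkeeping. Each column of $\B{H}=\B{B}(f)^{1/2}\B{S}$ lies in $\mathcal V$, and on $\mathcal V$ one has $\B{C}_{\rm n,Tot}=N_0\B{B}(f)+N_0(N_F-1)\mathbf I\to N_0N_F\,\mathbf I$, while $\B{D}\to N_0(N_F-1)\mathbf I$ (again since $\tfrac{a^2\gamma}{\lambda^2}\to0$), so $\B{C}_{\rm n,Tot}+(\tfrac{\pi}{2}-1)\B{D}\to N_0\big(1+\tfrac{\pi}{2}(N_F-1)\big)\mathbf I$ on $\mathcal V$. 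Because the limiting $\B{C}_{\rm n,Tot}$ and $\B{C}_{\rm n,Tot}+(\tfrac{\pi}{2}-1)\B{D}$ are block-diagonal w.r.t.\ $\mathcal V\oplus\mathcal V^{\perp}$ and $\B{H}$ is annihilated by the $\mathcal V^{\perp}$-block, $\B{H}^{\rm H}\B{C}_{\rm n,Tot}^{-1}\B{H}\to\tfrac{1}{N_0N_F}\B{H}^{\rm H}\B{H}$ and $\tfrac{2}{\pi}\B{H}^{\rm H}\B{C}_{\rm n,Tot}'^{-1}\B{H}=\B{H}^{\rm H}\big(\B{C}_{\rm n,Tot}+(\tfrac{\pi}{2}-1)\B{D}\big)^{-1}\B{H}\to\tfrac{1}{N_0(1+\frac{\pi}{2}(N_F-1))}\B{H}^{\rm H}\B{H}$. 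Substituting into the two rate formulas, $2^{R_k}-1\to\tfrac{\varepsilon_k}{N_0N_F\,[(\B{H}^{\rm H}\B{H})^{-1}]_{kk}}$ and $2^{R_{k,\rm LB}^{\rm 1-bit}}-1\to\tfrac{\varepsilon_k}{N_0(1+\frac{\pi}{2}(N_F-1))\,[(\B{H}^{\rm H}\B{H})^{-1}]_{kk}}$; dividing, the common factor $\varepsilon_k/[(\B{H}^{\rm H}\B{H})^{-1}]_{kk}$ cancels and leaves $N_F\big/\big(1+\tfrac{\pi}{2}(N_F-1)\big)$.

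The one step that genuinely needs care — and the main obstacle — is justifying ``$\B{C}$ is scalar on $\mathcal V$ $\Rightarrow$ $\B{H}^{\rm H}\B{C}^{-1}\B{H}$ has the stated limit'': $\B{S}$, hence $\B{H}$, has operator norm of order $\sqrt M\to\infty$, whereas the off-$\mathcal V$ eigenvalues of $\B{B}(f)$ only shrink to $0$, so a crude operator-norm bound on the discarded term $\B{H}^{\rm H}(\mathbf I-\B{\Pi})\B{C}^{-1}(\mathbf I-\B{\Pi})\B{H}$ diverges. Instead one bounds it entrywise by $\tfrac{1}{N_0 c}\sum_{\lambda_i<\epsilon}\lambda_i|\B{u}_i^{\rm H}\B{s}_k|\,|\B{u}_i^{\rm H}\B{s}_l|$ and uses $\mathrm E\sum_{\lambda_i<\epsilon}\lambda_i|\B{u}_i^{\rm H}\B{s}_k|^2=\sum_{\lambda_i<\epsilon}\lambda_i\to0$ (the spectral tail mass, itself controlled by $\sum_i\lambda_i(1-\lambda_i)\to0$), so the discarded term is $o_P(1)$, while the retained part $\B{H}^{\rm H}\B{\Pi}\B{H}=\B{S}^{\rm H}\B{\Pi}\B{S}$ stays of order $\dim\mathcal V$ and is almost surely invertible as long as the fixed aperture supports at least $K$ spatial degrees of freedom. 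One can also bypass this issue: the displayed ratio involves $\B{B}(f)$ only through $(1+c)\B{B}(f)\big(\B{B}(f)+c\,\mathbf I\big)^{-1}$, which tends to $\B{\Pi}$ \emph{regardless of $c$}, so the numerator and denominator of the ratio are asymptotically $(1+c_1)$ and $(1+c_2(a))$ times the \emph{same} quantity $[(\B{S}^{\rm H}\B{\Pi}\B{S})^{-1}]_{kk}$, giving the limit $\tfrac{1+c_1}{1+c_2^\star}=\tfrac{N_F}{1+\frac{\pi}{2}(N_F-1)}$ directly.
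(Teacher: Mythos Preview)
Your proposal is correct and in fact supplies considerably more than the paper does: the paper states Result~3 immediately after the UQN rate expression without a dedicated proof, implicitly taking the limit $a\to0$ so that the extra term $(\tfrac{\pi}{2}-1)\tfrac{a^2\gamma}{\lambda^2}\big(1+\sum_k\varepsilon_k/N_0\big)$ drops out and relying on the earlier remark that $\B{B}(f)$ is ``dominated by about $\tfrac{4a^2}{\lambda^2}M$ eigenvalues'' to treat $\B{B}(f)$ as identity on the visible subspace. You follow the same heuristic route but make each step precise. Your trace argument $\mathrm{tr}\,\B{B}(f)^2\to\mathrm{tr}\,\B{B}(f)$ via the Bessel identity $\int_0^\infty J_1(t)^2/t\,\mathrm dt=\tfrac12$ is a clean, self-contained way to force the spectrum of $\B{B}(f)$ onto $\{0,1\}$, and your entrywise tail-mass bound $\mathrm E\sum_{\lambda_i<\epsilon}\lambda_i|\B{u}_i^{\rm H}\B{s}_k|^2=\sum_{\lambda_i<\epsilon}\lambda_i\to0$ is exactly the right device to neutralize the growing $\|\B{S}\|$ when replacing $\B{B}(f)$ by the limiting projection $\B{\Pi}$. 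The paper does not engage with either of these points; your proposal is effectively a rigorous completion of the argument the paper leaves implicit.
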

 Interestingly, the ratio can be made close to one if the noise figure of the receivers chains $N_F$ is close to one (i.e., nearly ideal receiver chain). Note that this result is independent of the SNRs $\varepsilon_k/N_0$  as opposed to the well-know $2/\pi$ loss factor  \cite{mezghani2007,MezNos:Capacity-lower-bound:12,Mezghani2017} with uncoupled antennas that requires operating at sufficiently low SNR.  
\section{Spectral efficiency  
with zero-forcing receiver and one-bit DACs}
Consider the ideal downlink case with infinite resolution DACs. The channel matrix can be represented as 
\begin{equation}
\begin{aligned}
\B{H}= \B{S}   \B{B}(f)^{\frac{1}{2}}. 
\end{aligned}
\end{equation}
At the base station, linear precoding with the matrix $\B{F} \in \mathbb{C}^{K \times M}$ is applied to the symbol vector $\B{x}$ of the users and the signal received at the users stations is
\begin{equation}
\begin{aligned}
\B{r}=\B{H}\B{F}\B{x} +\B{n}.
\end{aligned}
\end{equation}
As linear  transmit processing, we take the widely used zero-forcing precoder  which is given by 
\begin{equation}
\begin{aligned}
\B{F}= \sqrt{\frac{\varepsilon}{{\rm tr}((\B{H}\B{H}^{\rm H})^{-1})}} \B{H}^{\rm H} (\B{H}\B{H}^{\rm H})^{-1},
\end{aligned}
\end{equation}
where $\varepsilon$ represents the total available power for transmission. The actual  radiated power is however smaller than $\varepsilon$
\begin{equation}
\begin{aligned}
P_R =&  \frac{ {\rm tr} \left( \B{H}  \B{B}(f)   \B{H}^{\rm H}(\B{H}  \B{H}^{\rm H})^{-2} \right) }{{\rm tr}((\B{H} \B{H}^{\rm H})^{-1})}  \varepsilon   \\
= &\frac{ {\rm tr} \left( \B{S}  \B{B}(f)^2  \B{S}^{\rm H}(\B{S}\B{B}(f)  \B{S}^{\rm H})^{-2} \right) }{{\rm tr}((\B{S}\B{B}(f)  \B{S}^{\rm H})^{-1})}  \varepsilon \leq \varepsilon
\end{aligned}
\end{equation}
where the inequality is due to $\B{B}(f)\preccurlyeq {\bf I}$  and holds with equality for uncoupled antennas. In other words, only a portion of the available power $\varepsilon$ is radiated. Furthermore, the achievable rate under the zero-forcing precoder can be written as  
\begin{equation}
\begin{aligned}
 \bar{R}_k &= \log_2 \left(1+ \frac{\varepsilon/N_0/N_F}{{\rm tr}((\B{H}\B{H}^{\rm H})^{-1})}\right)  \\
 &= \log_2 \left(1+ \frac{\varepsilon/N_0/N_F}{{\rm tr}((\B{S}\B{B}(f) \B{S}^{\rm H})^{-1})}\right).
\end{aligned}
\label{R_ideal_down}
\end{equation}

The use of densely space antenna transmit arrays in combination with low resolution DACs and sigma-delta conversion  has been already considered in the literature, albeit only for the case of uniform linear arrays \cite{pirzadeh2020effect,Palguna2016,Scholnik2004,Krieger2013,Shao2019}. A generalization to  be applied for planar arrays is rather not straightforward due the sequential precoding. Instead, we will use the technique of dithering \cite{Saxena2019,amodh_2020} to whiten the quatization error by adding a random signal prior to the 1-bit quatizer. Furthermore, to avoid injecting interference in the environments, we propose the concept of non-radiating dithering, where the dither signal lies mainly on the null space of the coupling matrix $\B{B}(f)$ . To minimize the radiated power originating from the additive dithering signal, we consider a matrix $\B{U}$ describing the approximate null space \cite{Kokiopoulou_2007}   of the coupling matrix\footnote{$\B{B}(f)$ is an "almost" low-rank matrix for $a\ll \lambda$ with many small eigenvalues but not exactly low-rank. Therefore a matrix $\B{U}$ is constructed that is nearly orthogonal to $\B{B}(f)$.} $\B{B}(f)$, i.e.,
\begin{equation}
{\rm tr}(\B{U}^{\rm H}\B{B}(f) \B{U}) / {\rm tr}(\B{U}^{\rm H} \B{U})  \ll 1  .
\end{equation}
Then, the dithering is performed after linear precoding as follows
\begin{equation}
\begin{aligned}
\B{z} &= 
 \B{F}\B{x}+ \frac{\B{U}}{\|\B{U}\|_F}\B{v}_d
\end{aligned}
\label{dither_1}
\end{equation}
with Gaussian IID dithering vector $\B{v}_d$ having variance $\sigma_d^2$. 
A possible choice for $\B{U}$ is a projection on the approximate null space
\begin{equation}
\B{U}={\bf I}-(1+\delta) \B{B}(f)^{\frac{1}{2}}(\B{B}(f)+\delta {\bf I})^{-1}\B{B}(f)^{\frac{1}{2}},
\label{dither_2}
\end{equation}
for sufficiently small $\delta$ which is used for as a threshold for the eigenvalues of $\B{B}(f)$.
Subsequently,  the quantized  transmit vector is given by 
\begin{equation}
\B{z}_Q= \sqrt{\alpha}  \frac{\B{D}^{\frac{1}{2}}}{\sqrt{2}} ( {\rm sign}(\mathcal{R}(\B{z}))+{\rm j} \cdot  {\rm sign}(\mathcal{I}(\B{z}))) 
\end{equation}
where a scaling factor $\alpha$ is introduced to ensure the same radiated power $P_R$ as the ideal system. The rescaling matrix $\B{D}$ contains the diagonal elements of the covariance matrix of $\B{z}$
\begin{equation}
\begin{aligned}
\B{D}&= {\rm Diag}( \B{F}\B{F}^{\rm H}+  \frac{\B{U}\B{U}^{\rm H}}{\|\B{U}\|_F^2} \sigma^2_d)  \\
&=  {\rm Diag}( \B{F}\B{F}^{\rm H}+  \frac{\sigma^2_d}{M} {\bf I}) \\
&\approx \frac{\varepsilon+\sigma^2_d}{M} {\bf I},
\end{aligned}
\end{equation}
where the approximation holds for large number of users. With this approximation and assuming uncorrelated quantization noise, the radiated power of the 1-bit system has the following expression
\begin{equation}
\begin{aligned}
P_R^{\rm 1-bit} &=  \alpha \frac{2}{\pi} P_R +  \alpha {\rm tr} (\B{D}^{\frac{1}{2}} \B{R}_q \B{D}^{\frac{1}{2}} \B{B}(f) ) \\
&\approx  \alpha \frac{2}{\pi} P_R + \underbrace{\alpha (1-\frac{2}{\pi}) \frac{a^2}{\lambda^2} (    \varepsilon  +  \sigma^2_d) \gamma}_{\rm unwanted~isotropic~radiation} 
\end{aligned}
\end{equation}
 To allow for a fair comparison, we equate the radiated power of the 1-bit system to the ideal system, $P_R^{\rm 1-bit} = P_R $, yielding
\begin{equation}
\begin{aligned}
\alpha=\frac{P_R}{\frac{2}{\pi} P_R +  (1-\frac{2}{\pi}) \frac{a^2}{\lambda^2} (    \varepsilon  +  \sigma^2_d) \gamma}.
\end{aligned}
\end{equation}
Consequently, the noiseless received signal at the users terminals is 
\begin{equation}
\B{H}\B{z}_Q= \sqrt{\alpha\frac{2}{\pi}  \frac{\varepsilon}{{\rm tr}((\B{H}\B{H}^{\rm H})^{-1})}} \B{x}+ \sqrt{\alpha} \B{H}  \B{D}^{\frac{1}{2}} \B{q}.
\end{equation}
Finally, a lower bound on the achievable rate with 1-bit DACs can be obtained by simply modeling the quantization error as additive noise that is received through the channel  
\begin{equation}
\begin{aligned}
 \bar{R}_{k,\rm LB}^{\rm 1-bit} &= \log_2 \left(1+ \frac{\frac{2}{\pi}\alpha \varepsilon/{\rm tr}((\B{H}\B{H}^{\rm H})^{-1}) }{N_0N_F+  \alpha[\B{H}\B{D}^{\frac{1}{2}} \B{R}_q \B{D}^{\frac{1}{2}} \B{H}^{\rm H}]_{kk}}\right)  \\
 &\stackrel{\rm UQN}{\approx}  \log_2 \left(1+ \frac{\frac{2}{\pi}\alpha \varepsilon/{\rm tr}((\B{H}\B{H}^{\rm H})^{-1}) }{N_0N_F+  \alpha (1-\frac{2}{\pi}) \frac{a^2}{\lambda^2} (    \varepsilon  +  \sigma^2_d) \gamma  }\right) \\
  &=  \log_2 \left(1+ \frac{\frac{2}{\pi}\alpha \varepsilon/{\rm tr}((\B{H}\B{H}^{\rm H})^{-1}) }{N_0N_F+  (1-\frac{2}{\pi}\alpha )\varepsilon } \right) \\
   &=  \log_2\! \left(\! 1\! +\!  \frac{ \varepsilon/{\rm tr}((\B{H}\B{H}^{\rm H})^{-1})/(N_0N_F) }{1\! +\!  ( \frac{1}{P_R}\! +\! \frac{1}{N_0N_F}) (\frac{\pi}{2}\! -\! 1) \frac{a^2}{\lambda^2}  ( \varepsilon \!   + \!   \sigma^2_d) \gamma } \! \right). \\
\end{aligned}
\label{perf_1_bit_downlink}
\end{equation}
By comparing with the ideal case in (\ref{R_ideal_down}), we state the following result.
\begin{lemma}
For infinitely dense arrays, the  SNR loss factor due the 1-bit DACs vanishes under the same radiated power as the ideal system
\begin{equation}
\begin{aligned}
\lim_{a \longrightarrow 0} \frac{2^{\bar{R}_{k,\rm LB}^{\rm 1-bit}}-1}{2^{\bar{R}_{k}}-1} = 1.
\end{aligned}
\end{equation}
\end{lemma}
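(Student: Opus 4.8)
The plan is to reduce the claimed limit to a short computation. Writing $\Delta(a):=\left(\tfrac{1}{P_R}+\tfrac{1}{N_0N_F}\right)\left(\tfrac{\pi}{2}-1\right)\tfrac{a^2}{\lambda^2}(\varepsilon+\sigma_d^2)\gamma$ for the distortion term appearing in the denominator of the last line of \eqref{perf_1_bit_downlink}, one reads off directly
\[
2^{\bar{R}_k}-1=\frac{\varepsilon/(N_0N_F)}{{\rm tr}((\B{H}\B{H}^{\rm H})^{-1})},\qquad
2^{\bar{R}_{k,\rm LB}^{\rm 1-bit}}-1=\frac{1}{1+\Delta(a)}\cdot\frac{\varepsilon/(N_0N_F)}{{\rm tr}((\B{H}\B{H}^{\rm H})^{-1})}
\]
from \eqref{R_ideal_down} and \eqref{perf_1_bit_downlink} respectively. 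Dividing, the entire channel-dependent prefactor cancels exactly and
\[
\frac{2^{\bar{R}_{k,\rm LB}^{\rm 1-bit}}-1}{2^{\bar{R}_k}-1}=\frac{1}{1+\Delta(a)},
\]
so the whole statement amounts to showing $\Delta(a)\to0$ as $a\to0$.

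The easy part is to dispose of the harmless factors: $N_0$, $N_F$, $\gamma$ ($=\pi$ for the cosine element), $\tfrac{\pi}{2}-1$ and the power budget $\varepsilon$ are all fixed, while the dither power $\sigma_d^2$ is a free design parameter that we keep bounded (more generally $\sigma_d^2=o(\lambda^2/a^2)$ suffices), so that $\tfrac{a^2}{\lambda^2}(\varepsilon+\sigma_d^2)\to0$. The piece $\tfrac{1}{N_0N_F}\left(\tfrac{\pi}{2}-1\right)\tfrac{a^2}{\lambda^2}(\varepsilon+\sigma_d^2)\gamma$ of $\Delta(a)$ therefore already vanishes, and the only thing left to rule out is that $\tfrac{a^2/\lambda^2}{P_R}$ blows up, i.e., one must show that $P_R$ does not vanish as the array densifies.

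This non-degeneracy of $P_R$ is the main obstacle. From $P_R=\varepsilon\,{\rm tr}\big(\B{H}\B{B}(f)\B{H}^{\rm H}(\B{H}\B{H}^{\rm H})^{-2}\big)/{\rm tr}\big((\B{H}\B{H}^{\rm H})^{-1}\big)$ with $\B{H}=\B{S}\B{B}(f)^{1/2}$, each summand is a $\B{B}(f)$-weighted average of the eigenvalues of $\B{B}(f)$ along a channel direction $\B{S}^{\rm H}\B{u}$; since $\B{B}(f)\preccurlyeq\B{I}$ while its trace $\pi\tfrac{a^2}{\lambda^2}M$ is carried essentially by its $\Theta(\tfrac{a^2}{\lambda^2}M)$ ``passband'' eigenvalues, those eigenvalues approach a positive constant (the prolate-type eigenvalue concentration underlying the effective-aperture count \cite{franceschetti_2017,williams2019communication}), and for a fixed number $K$ of users the zero-forcing precoder has vanishing energy on the tail eigenspace of $\B{B}(f)$; hence $P_R$ stays bounded below by a positive constant (indeed $P_R\to\varepsilon$), so $\tfrac{a^2/\lambda^2}{P_R}\to0$. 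Consequently $\Delta(a)\to0$, the denominator tends to $1$, and the ratio tends to $1$, which is the assertion. Only this last step invokes a quantitative property of the spectrum of $\B{B}(f)$; everything else is algebra together with the single limit $\tfrac{a^2}{\lambda^2}\to0$, and if one is willing to take for granted that the radiated power $P_R$ of a well-posed precoding configuration is bounded away from zero, the proof is immediate.
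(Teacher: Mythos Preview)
Your argument is correct and follows the same route as the paper: the paper simply states the result by comparing \eqref{R_ideal_down} with the last line of \eqref{perf_1_bit_downlink}, which is exactly your computation of the ratio as $1/(1+\Delta(a))$ followed by $\Delta(a)\to 0$. In fact you are more careful than the paper, which does not explicitly address the non-degeneracy of $P_R$; your spectral argument for why $P_R$ stays bounded away from zero (fixed aperture, passband eigenvalues of $\B{B}(f)$ concentrating near a positive constant) fills a gap the paper leaves implicit.
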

Interestingly, this results is better than Result~\ref{result_uplink} obtained for the uplink case with 1-bit ADCs. This can be explained by the fact that the additive dithering noise can be engineered in the DAC case, while the device noise for the ADC case is in general white and does not offer as much flexibility. It is worth mentioning that the choice of the dither variance $\sigma_d^2$ is critical. It has to be large enough to reduce the quantization noise correlation but excessive values would lead to a penalty as shown in the rate formula (\ref{perf_1_bit_downlink}). This trade-off and the corresponding optimization of $\sigma_d^2$ have been considered for the case of conventional antenna arrays in \cite{Saxena2019,amodh_2020}. In our setting, the optimal value for $\sigma_d^2$  will depend on the number of users, antennas, and antenna spacing. 
\section{Numerical examples}

We consider in this section some numerical examples and particular cases. In particular, we will discuss the validity of the common approximation of uncorrelated
distortion error and draw key properties of densely spaced antenna systems with low-resolution converters. In fact, the validity of this
approximation will also indicate how close the performance is to the ideal system when relying on simple linear processing \cite{Mezghani2017}.

In all numerical examples, we fix  the total geometric aperture to be  $a \sqrt{M} \times a \sqrt{M}=2.5 \lambda \times 2.5 \lambda $, while varying $M$ and $a$. Furthermore, we assume a cosine-based element pattern  $A_{e}(\theta, \varphi)=a^2 \cos \theta$ since it is physically feasible. In addition, we consider a  rich scattering environment with  $K=2$ users and isotropic multipath components, leading to a user's channel covariance matrix that is equal to ${\rm E}[\B{h}_k\B{h}_k^{\rm H}]=\B{B}(f)$. Accordingly, the performance results are averaged over 100 channel realizations.    
\subsection{Uplink case}
Consider fist the uplink case with $\varepsilon_k/N_0=2$ for both users and $N_F=2$. The corresponding achievable rates are shown in Fig.~\ref{fig:RX}  versus the number elements for the ideal case with infinite resolution as well as the one-bit case, both with and without the uncorrelated quantization noise assumption (UQN). The element count is increased from $M=25$ ($\lambda/2$ inter-element spacing) to $M=400$ ($\lambda/8$ inter-element spacing). The ideal system does not significantly take advantage of increasing the element density above half-wavelength spacing and the performance is mainly given by the total physical aperture. In fact, array gain beyond the normal geometric aperture of the antenna is known as super-gain but occurs practically only at small to moderate electrical size $\sqrt{2M} \frac{a}{\lambda}$ \cite{Harrington:1960} and requires, in general, a matching network \cite{Ivrlac:2010_2}. By contrast, the system with 1-bit ADCs clearly benefits from spatial oversampling and a substantial gain is obtained particularly when moving from   $\lambda/2$ inter-element spacing to $\lambda/4$ inter-element spacing. We also observe that the UQN model tends to overpredict the performance particularly for $M=25$, i.e., $a=\lambda/2$  but the discrepancy to the exact bound decreases as we increasingly overpopulate the antenna structure. This shows that ideal performance can be approached with  1-bit ADCs simple linear processing, without the need for complex spatial sigma-delta conversion as proposed in the prior work.      
\begin{figure}[h!]
\centering
\includegraphics[scale=\figscale]{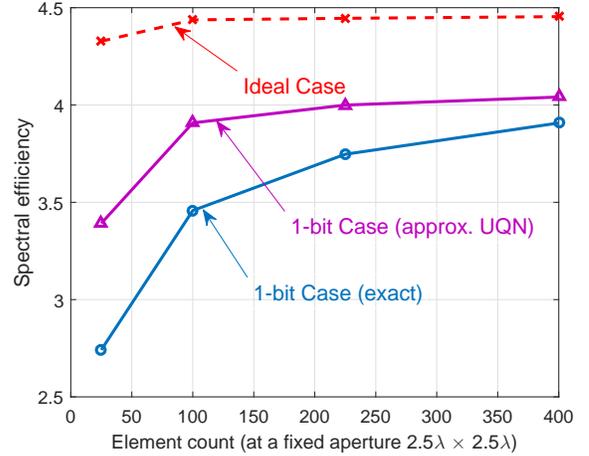}
\caption{Spectral efficiency versus element count for the uplink scenario at fixed aperture $2.5\lambda\times 2.5\lambda$ with $\varepsilon_k/N_0=2$, $N_F=2$, and two users.}
\label{fig:RX}
\end{figure}
\subsection{Downlink case}
In the downlink scenario, a quite similar system setting is considered with 2 users and a rich scattering environment. We fix the available power to be $\varepsilon/(N_0N_F)=2$ and the total aperture to be $2.5 \lambda \times 2.5 \lambda $ and vary the number of elements from 25 up to 400. Again, the achievable rate of the ideal system hardly changes as shown in Fig.~\ref{fig:TX} for similar reasons as for the uplink case. Furthermore, the 1-bit system performs poorly if no dithering is introduced (dashed line) and does hardly improve with higher antenna count. This is explained by the strong correlations among the individual distortion errors that combine coherently across the antenna/channel and severely affect the effective signal-to-noise-and-distortion ratio.  In such a case, it is important to avoid the risk of relying on the oversimplified UQN model that might lead to misleading insights. Nevertheless, by adding non-radiating dithering according to (\ref{dither_1}) and (\ref{dither_2}), the performance improves substantially and the UQN becomes more accurate as shown in Fig.~\ref{fig:TX}. To ensure this, the power of the dither signal $\sigma_d^2$ is increased proportionally to $\frac{\lambda}{a}$ to ensure the UQN property. Since this increase is still slower than $\frac{\lambda^2}{a^2}$, the performance derived in (\ref{perf_1_bit_downlink}) can still arbitrarily approach the ideal performance. Fig.~\ref{fig:TX_power} illustrates the ratio of the radiated power $P_R$ to  power of the excitation vector, being  $\varepsilon$ for the ideal system and  $\alpha(\varepsilon+\sigma_d^2)$ for the 1-bit system. The fact that the ratio for the 1-bit system reduces with the number of elements means that most of the power that circulates in the antenna is reactive in this regime. Since constant-envelope low-resolution signals are exciting the elements, this suggests the use of efficient switched-mode amplifiers exploiting the capability of the array for storing and recycling reactive power present in the evanescent field.   

\begin{figure}[h!]
\centering
\includegraphics[scale=\figscale]{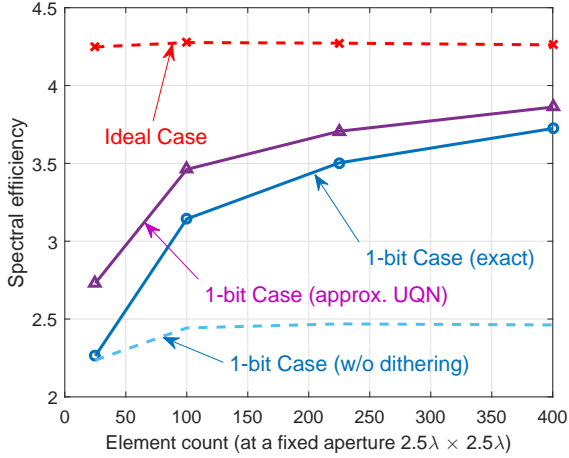}
\caption{Spectral efficiency versus element count for the downlink scenario at fixed aperture $2.5\lambda\times 2.5\lambda$ with $\varepsilon/(N_0N_F)=2$, $\frac{\sigma_d^2}{\varepsilon}=\frac{\lambda}{3a}$,  $\delta=0.01$ and two users.}
\label{fig:TX}
\end{figure}

\begin{figure}[h!]
\centering
\includegraphics[scale=\figscale]{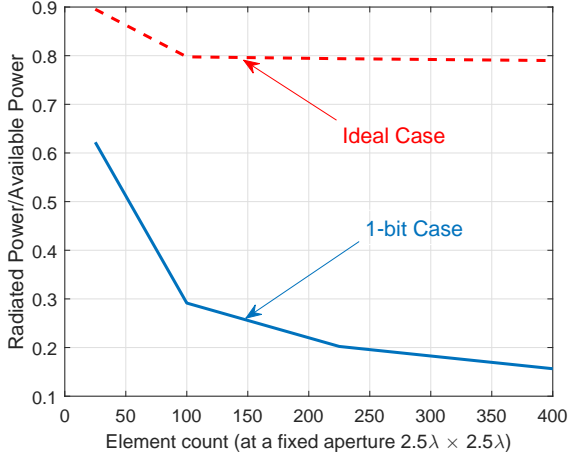}
\caption{Ratio $P_R/\varepsilon$ in the ideal system and $P_R/(\alpha\varepsilon+\alpha\sigma_d^2)$ in the 1-bit system versus element count for the downlink scenario at fixed aperture $2.5\lambda\times 2.5\lambda$ with $\varepsilon/(N_0N_F)=2$, $\frac{\sigma_d^2}{\varepsilon}=\frac{\lambda}{3a}$,  $\delta=0.01$ and two users.}
\label{fig:TX_power}
\end{figure}

\section{Conclusion}
Compact antenna arrays with inter-element spacing smaller than a half-wavelength are required for future radio systems to operate in different frequency bands and serve several applications and functionalities. We provided some requirements on any models used for this type of antennas to be physically consistent. Additionally, we showed that tightly spaced multiantenna systems can help to relax the linearity requirement of the radio frontend  by using 1-bit converters and exploiting their spatial filtering feature provided by nature and at no extra digital processing complexity. In particular, power amplifiers can be operated in saturated or switched-mode, while most of the generated distortion manifests as  evanescent-wave  
and does not radiate. This opens new directions for designing waveforms and processing techniques dedicated to this type of antenna structures.

\bibliographystyle{IEEEtran}
\bibliography{IEEEabrv,references.bib}
 
\end{document}